\newtheorem{lemma}{Lemma}
\begin{document}

\title{Deep Q-Learning Based Resource Allocation in Interference Systems With Outage Constraint}

\author{Saniul Alam, 
        Sadia Islam,
        Muhammad R. A. Khandaker,~\IEEEmembership{Senior Member,~IEEE,} 
        Risala T. Khan,~\IEEEmembership{Senior Member,~IEEE,}
        Faisal Tariq,~\IEEEmembership{Senior Member,~IEEE},
        and Apriana Toding
%        and Kai-Kit Wong,~\IEEEmembership{Fellow,~IEEE}

%\thanks{Saniul Alam, Sadia Islam and Risala Khan are with the Institute of Information Technology, Jahangirnagar University, Dhaka 1342, Bangladesh.} 
%\thanks{Muhammad Khandaker are with the School of Engineering and Physical Sciences, Heriot-Watt University, Edinburgh EH14 4AS, United Kingdom.}
%\thanks{Faisal Tariq is with the James Watt School of Engineering, University of Glasgow, United Kingdom.}

%\thanks{This work was supported in part by the EPSRC Project EP/.}
%\thanks{\texttt{Submitted Version. Copyright belongs to IEEE.}}
}

% make the title area
\maketitle

% As a general rule, do not put math, special symbols or citations
% in the abstract or keywords.
\begin{abstract}
 This correspondence considers the resource allocation problem in wireless interference channel (IC) under link outage constraints. Since the optimization problem  is non-convex in nature, existing approaches to find the optimal power allocation are computationally intensive and thus practically infeasible. Recently, deep reinforcement learning has shown promising outcome in solving non-convex optimization problems with reduced complexity. In this correspondence, we utilize a deep $Q$-learning (DQL) approach which interacts with the wireless environment and learns the optimal power allocation of a wireless IC while maximizing overall sum-rate of the system and maintaining reliability requirement of each link. We have used two separate deep $Q$-networks to remove the inherent instability in learning process. Simulation results demonstrate that the proposed DQL approach outperforms existing geometric programming based solution.

\end{abstract}

% Note that keywords are not normally used for peerreview papers.
\begin{IEEEkeywords}
Deep reinforcement learning, $Q$-learning, deep $Q$-learning, deep $Q$-network, wireless interference channel, resource allocation.
\end{IEEEkeywords}

% For peer review papers, you can put extra information on the cover
% page as needed:
% \ifCLASSOPTIONpeerreview
% \begin{center} \bfseries EDICS Category: 3-BBND \end{center}
% \fi
%
% For peerreview papers, this IEEEtran command inserts a page break and
% creates the second title. It will be ignored for other modes.
\IEEEpeerreviewmaketitle

\section{Introduction}
Resource allocation tasks such as optimal power allocation has a significant effect on the capacity and performance of wireless communication \cite{jrnl_inter}. 
%Extensive research has been  done for  developing resource allocation schemes to improve the performance of wireless networks \cite{hong2014signal}. 
Traditionally, resource allocation problems have been approached and tackled by established numerical optimization approaches such as interference pricing \cite{wang2008price}, water filling  algorithms \cite{hu2011adaptive}, fractional programming \cite{shen2018fractional} and weighted minimum mean-squared error (WMMSE) minimization \cite{wmmse}. Most of these algorithms are mathematically formulated to optimize a certain system performance metrics such as sum-rate, minimum rate or mean-squared error. 
%and Lagrangian dual decomposition approach \cite{hayashi2009spectrum}
These algorithms generally adopted iterative approaches where they take certain parameters like channel realization as input and produce the optimum resource allocation strategy. 

Even though these established resource and power allocation methods have improved the performance to some extent, they come with their own set of problems. These non-convex optimization problems are NP-hard and thus incur high computational cost.  As a result, many of the algorithms and solutions become infeasible for real world deployment.
%as it leads to poor battery performance of the mobile devices. So, this is counter productive to the goal of optimal power allocation. 
Moreover, as it takes time to compute the optimum allocation, it also has negative effect on the latency performance of the system which is a key performance metric for current and future communication systems \cite{jrnl_6g}. Despite the high computational cost, these algorithms have to be recomputed frequently since the channel is time varying and the allocation  may not remain usable for long. 
%Therefore, there produced output may not be consideration of most current channel conditions.

Machine  learning (ML) techniques has recently been considered in wireless communication research as an attractive tool for providing viable solution to major challenges such as channel decoding and estimation \cite{conf_v2i_ch_estim} as well as optimal power allocation \cite{JGao19}. 
%Significant advancement has been achieved in designing solutions for wireless communications exploiting ML techniques \cite{8743390}. 

%\begin{comment}
%In order to tackle the above discussed problems researchers have been searching for faster and less complicated solution to the power allocation problems which will reduce computational complexity and latency as well as produce output in real time.
 %The work in \cite{supervised} uses an already existing near optimal algorithm (WMMSE) for generating training data and trains a deep neural network with it to approximate the output of WMMSE. The work in \cite{beck2009fast} tires to approximate the solution of iterative soft-thresholding algorithm (ISTA) for sparse optimization. One of the main problem with supervised learning is that its  success is highly dependant on the accuracy of the system model that generates the training data and it can not perform better than a targeted benchmark.
%\end{comment}

Deep reinforcement learning (DRL), a member of ML family, has been considered as an attractive method which can learn the optimal solution of the resource allocation problem by interacting with the wireless environment. 
%The work in  has discussed various challenges of ML, deep learning (DL) and reinforcement learning (RL). It was found that despite the challenges, 
Reinforcement learning (RL) generally outperforms traditional approaches in most of the cases \cite{8382166}. The work in \cite{ahmed2019deep}  uses deep $Q$-learning (DQL) for centralized downlink power allocation scheme which maximizes the network throughput in a multi cell network. The work in \cite{nasir2019multi} also uses multi agent DQL to find optimal power allocation in a wireless cellular network, where each transmitter receives channel state information (CSI) from several neighbours and learns to adjust its transmit power accordingly. A deep $Q$-network (DQN) was used to overcome the instability problem of DL. In \cite{9295396}, the authors proposed faded-experience trust-region policy optimization (FE-TRPO) base power allocation algorithm which exploits continuous DRL. The work demonstrated that compared to the traditional WMMSE techniques, it significantly decreases computational complexity while maintaining similar performance. %Authors of \cite{9249408} exploit continuous power control data and adopted offline DRL and used $Q$-learning (QL) which performed much better than WMMSE.

While most of these works use DRL for maximizing the overall system throughput, the schemes do not always guarantee reliability for each individual user. As a consequence, users with low channel gain may not be able to transmit at all while user with the best channel gain may be allowed to transmit at maximum power resulting in unfairness in the system. 
%The work in \cite{liang2019spectrum} considers a vehicular network and performs power and spectrum allocation for each vehicle to vehicle (V2V)  and vehicle to infrastructure (V2I)  link while ensuring reliability for each  using multi agent RL.
To address this limitation, this work exploits DQN with a reliability outage requirement threshold for each individual user. The DQN finds the relationship between available channel information and the solution of the power allocation problem. A DQN agent is then used to find the collective power allocation policy of the users that maximizes the overall system  throughput while maintaining  reliability of each user.
The main contribution of this paper can be summarised as:
\begin{itemize}
    \item Firstly, the resource allocation problem of a wireless interference network under probabilistic constraints has been solved using geometric programming (GP) (following \cite{boyd_out}).
    \item Then, a DRL-based solution has been developed in order to reduce computational complexity and make the solution practicable. Two separate deep $Q$-networks have been used in the DRL solution to remove the inherent instability of the learning process.
    \item Simulation results demonstrate that the proposed DRL approach significantly outperforms the GP approach.

\end{itemize}
%The rest of the paper is organized as follows: We discuss our system model and problem formulation in Section II. We present a overview of DRL and discuss our proposed DQL based  resource allocation design in Section III. Section IV presents and analyses the simulation results. Concluding remarks are finally made in section V.

%\begin{figure}[ht!]
%\centering
%\includegraphics[width=0.7\linewidth]{int_sys_mod.eps}
%\includegraphics[width= 1 \linewidth]{Nibir001.eps} 
%\caption{Wireless interference system model.} \label{sys_mod}
%\end{figure}

\section{System Model}\label{sec_vision}
We consider an interference channel (IC) as shown in Fig.~\ref{rl_fig}, where $K$ single-antenna transmitters are communicating with $K$ single-antenna receivers. The direct channel between the $k$th transmitter-receiver pair is denoted by $h_{k,k}$ while $h_{k,j}$ denotes the IC between transmitter $j$ and receiver $k$. Let us denote $x_k$ as the signal transmitted by transmitter $k$ and $y_k$ as the corresponding received signal at receiver $k$. Then $y_k$ is given by
\begin{align}
  y_{k}=h_{k k} x_{k}+\sum_{j \neq k} h_{k j} x_{j}+n_{k}.
\end{align}
Here $n_{k} \sim \mathcal{N}\left(0, \sigma_k^{2}\right)$ for $\sigma_k > 0$ represents the additive channel noise. We assume that $p_k$ is the transmission power of transmitter $k$. Note that the symbol transmitted by different transmitters are independent of each other. Then the signal to interference-plus-noise ratio (SINR) can be expressed as 
\begin{align}
 \gamma_{k} \triangleq \frac{\left|h_{k k}\right|^{2} p_{k}}{\sum_{j \neq k}\left|h_{k j}\right|^{2} p_{j}+\sigma_{k}^{2} }\label{sink},
\end{align}
where $\sigma^2_{k}$ denotes the noise power at receiver $k$. Thus the achievable rate at the $k$th receiver under Gaussian channel fading is given by
\begin{align}
R_k = \log_2 \left(1+\frac{\left|h_{kk}\right|^{2} p_{k}}{\sum_{j\ne k}\left|h_{kj}\right|^{2} p_{j}+\sigma_{k}^{2}}\right).
\end{align}
%Our aim is to design optimal power allocation strategies for maximizing the overall network throughput under chance reliability constraint for each active link.

\subsection{Problem Formulation}
%%%%%%%%%%%%%%%%%%%%%%%%%%%%%%%%%%%%%%%%%%%%
%\begin{comment}
Our aim is to maximize the overall system throughput of the communication links by optimally allocating transmit power among the active links such that the worst-user's mutual-information is maximized under link outage constraints. Accordingly, we formulate the following optimization problem:
\begin{subequations}\label{prob_msr0}
\begin{align}
\max_{\{p_k\}} ~&~ \min_{k} ~~ R_k \label{prob_msr0_o}\\ %\sum_{k=1}^K 
{\rm s.t.} ~&~ {\rm Pr} \{\gamma_k \le \gamma_0\} \le p_0, \forall k, \label{prob_msr0_c1}\\
 ~&~ P_{\rm min} \le p_k \le P_{\rm max}, \forall k, \label{prob_msr0_c2}
\end{align}
\end{subequations}
where $P_{\rm max}$ and $P_{\rm min}$ are the maximum and minimum transmit power budgets of the users, respectively, $\gamma_0$ is the minimum SINR requirement that needs to be maintained by each user to ensure reliability, and the tolerable outage probability is denoted by $p_0$. Note that problem \eqref{prob_msr0} is a non-convex problem with fractional objective and the nonlinear probabilistic constraints, and hence the exactly optimal solution is non-trivial. In the following, we will first develop an acceptable solution to the problem following traditional alternating approaches, and then following ML techniques.
%\end{comment}
%%%%%%%%%%%%%%%%%%%%%%%%%%%%%%%%%%%%%%%%%%%%%%%%%%%%%%%%%%%%%%%%%%%%%%%%%%%%%%%%%%%%%%%%%

\begin{figure}[ht!]
\centering
%\includegraphics[width=0.7\linewidth]{int_sys_mod.eps}
%\includesvg{env.svg}
\includegraphics[width= 1\linewidth]{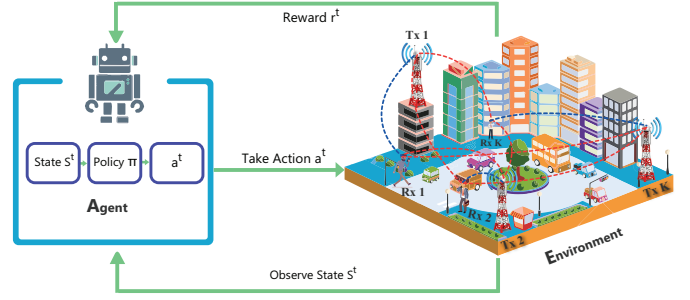} 
\caption{Deep reinforcement learning for IC.
} \label{rl_fig}
\end{figure}

For ease of exposition, let us consider an interference-limited system 
%. In an interference-limited system (e.g. in urban-cellular system), 
such that the noise term in \eqref{sink} can be ignored. Thus, the SINR in \eqref{sink} reduces to
\begin{align}
 \tilde{\gamma}_{k} \triangleq \frac{\left|h_{k k}\right|^{2} p_{k}}{\sum_{j \neq k}\left|h_{k j}\right|^{2} p_{j} }.\label{sinr_int}
\end{align}
Then the chance SINR outage requirement in \eqref{prob_msr0_c1} can be expressed as
\begin{align}
    {\rm O}_{k} = & {\rm Pr} \{\gamma_k \le \gamma_0\}, \forall k, \nonumber\\ 
    \approx & {\rm Pr} \{\left|h_{kk}\right|^{2} p_{k} \le \gamma_0 \sum_{j\ne k}\left|h_{ kj}\right|^{2} p_{j}\}, \forall k, \label{out_prob}
\end{align}
Note that the outage probability $O_k$ can be interpreted as the fraction of time the $k$th transmitter/receiver pair experiences an outage due to fading. Interestingly, it has been shown analytically in \cite{outage_prob, boyd_out} that the outage probability \eqref{out_prob} can be expressed in closed form using the following Lemma.

\begin{lemma}\label{lem_impli}
If $x_1, x_2 \dots, x_N$ are independent random variables following exponential distribution with means $\mu_i \triangleq \mathbb{E}[x_i] = 1/\lambda_i$, then we have
\begin{align}
{\rm Pr}\left (x_{1} \leq \sum _{i=2}^{N} x_{i} \right) = 1 - \prod _{i=2}^{N} \left({\frac{1}{ 1+\frac{\lambda_1} {\lambda_i}}} \right). \label{lem_posi}
\end{align}
\end{lemma}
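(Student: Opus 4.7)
The plan is to convert the event $\{x_1 \le \sum_{i=2}^N x_i\}$ into an expectation over the sum $S \triangleq \sum_{i=2}^N x_i$, exploit the exponential survival function of $x_1$, and then use independence of $x_2,\dots,x_N$ to factor the resulting Laplace transform into a product over $i$. This avoids any convolution gymnastics for the density of $S$.

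First I would write the complementary event. Since the goal is a clean product, I would compute $\Pr(x_1 > S)$ and subtract from $1$ at the end. Conditioning on $S$ and using that $x_1$ is independent of $x_2,\dots,x_N$ with $\Pr(x_1 > s) = e^{-\lambda_1 s}$ for $s\ge 0$, I get
\begin{equation}
\Pr(x_1 > S) = \mathbb{E}\bigl[\Pr(x_1 > S \mid S)\bigr] = \mathbb{E}\bigl[e^{-\lambda_1 S}\bigr],
\end{equation}
which is exactly the Laplace transform of $S$ evaluated at $\lambda_1$.

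Next, independence of $x_2,\dots,x_N$ factors the expectation as $\mathbb{E}[e^{-\lambda_1 S}] = \prod_{i=2}^N \mathbb{E}[e^{-\lambda_1 x_i}]$. Each factor is an elementary integral against the $\mathrm{Exp}(\lambda_i)$ density:
\begin{equation}
\mathbb{E}\bigl[e^{-\lambda_1 x_i}\bigr] = \int_0^\infty \lambda_i e^{-(\lambda_1+\lambda_i)t}\,dt = \frac{\lambda_i}{\lambda_1+\lambda_i} = \frac{1}{1+\lambda_1/\lambda_i}.
\end{equation}
Taking the complement yields precisely \eqref{lem_posi}.

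There is no real obstacle here; the only thing to be careful about is the direction of the inequality (using $\Pr(x_1>s)=e^{-\lambda_1 s}$ rather than the CDF) so that the Laplace transform appears directly, and noting that all integrals converge because $\lambda_i>0$ ensures $\lambda_1+\lambda_i>\lambda_1$. One could also verify the base case $N=2$ by hand and proceed by induction on $N$, but the Laplace-transform route handles all $N$ uniformly in a single line and is the cleaner option.
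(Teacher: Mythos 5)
Your proof is correct and complete. The paper itself omits the proof entirely, deferring to \cite[Appendix~I]{boyd_out}, and your conditioning-plus-Laplace-transform argument ($\Pr(x_1 > S) = \mathbb{E}[e^{-\lambda_1 S}] = \prod_{i=2}^{N}\lambda_i/(\lambda_1+\lambda_i)$ by independence, then take the complement) is exactly the standard route that the cited reference follows, so you have in effect supplied the missing details rather than diverged from them.
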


\begin{proof}
The proof follows similar lines as in \cite[Appendix~I]{boyd_out}. Therefore, we omit the proof here for brevity.
\end{proof}

Using Lemma~\ref{lem_impli} for an interference-limited scenario, we can express the outage-probability in \eqref{out_prob} as
\begin{align}
    {\rm O}_{k} =1 - \prod _{j \ne k}\frac{1}{1 + \frac{\gamma_0 |{ h}_{kj}|^{2}{p}_j} {|{h}_{kk}|^{2}{P}_k}}.
\end{align}
Thus the outage constraint \eqref{prob_msr0_c1} can be expressed as
\begin{align}
    (1- p_0) \prod _{j \ne k} \left (1+ \frac{\gamma_0 |{h}_{kj}|^{2}{p}_{j}} {|{h}_{kk}|^{2} { p}_{k}} \right) \leq 1. \label{out_posi}
\end{align}
It can be shown that the left-hand side of the inequality \eqref{out_posi} is, in fact, a posynomial function of the powers $p_1, \, p_2, \, \dots, \, p_K$. Thus we can express the problem as
\begin{subequations}\label{prob_msr2}
\begin{align}
\max_{\{p_k\}} ~&~ \min_{k} ~~ R_k  \label{prob_msr2_o}\\
{\rm s.t.} ~&~ (1- p_0) \prod _{j \ne k} \left (1+ \frac{\gamma_0 |{h}_{kj}|^{2}{p}_{j}} {|{h}_{kk}|^{2} {p}_{k}} \right) \leq 1, \forall k, \label{prob_msr2_c1}\\
 ~&~ \frac{P_{\rm min}} {p_k} \le 1, \forall k,  \label{prob_msr2_c2}\\
 ~&~ \frac{p_k}{P_{\rm max}} \le 1, \forall k. \label{prob_msr2_c3}
 %~&~ \prod_k p_k = p_1\times p_2 \dots p_K\\
\end{align}
\end{subequations}
The problem is still non-convex due to the fractional objective function as well as the constraint \eqref{prob_msr2_c1}, however, a solution can be obtained based on the Perron–Frobenius theorem for the maximum eigenvalue of the matrix $\{{\bf H} ~|~ [{\bf H}]_{i,j} = |h_{i,j}|^2\}$ that has non-negative elements (channel gains in this case) \cite{Mitra1994}. Note that due to the monotonicity of the $\log$ function, problem \eqref{prob_msr2} can be equivalently solved by replacing the objective function by the corresponding SINR values. Now, by introducing a slack variable $\eta$, we can reformulate problem \eqref{prob_msr2} as
\begin{subequations}\label{prob_srm3}
\begin{align}
\max_{\eta,\{p_k\}} ~&~ \eta  \label{prob_srm3_o}\\
{\rm s.t.} ~&~  \frac{ \eta \left({\sum_{j\ne k}\left|h_{ kj}\right|^{2} p_{j}}\right)}{{\left|h_{kk}\right|^{2} p_{k}}} \le 1, \forall k, \label{prob_srm3_c0}\\
~&~ (1 - p_0) \prod _{j \ne k} \left (1+ \frac{\gamma_0 |{h}_{kj}|^{2}{p}_{j}} {|{h}_{kk}|^{2} {p}_{k}} \right) \leq 1, \forall k, \label{prob_srm3_c1}\\
 ~&~ \frac{P_{\rm min}} {p_k} \le 1, \forall k,  \label{prob_srm3_c2}\\
 ~&~ \frac{p_k}{P_{\rm max}} \le 1, \forall k. \label{prob_srm3_c3}
 %~&~ \prod_k p_k = p_1\times p_2 \dots p_K
\end{align}
\end{subequations}
It can be observed that constraints \eqref{prob_srm3_c0} and \eqref{prob_srm3_c1} are homogeneous with respect to powers thereby depend only on the power ratios. Accordingly, problem \eqref{prob_srm3} is a geometric program (GP) with respect to $\{p_k\}$, therefore, it can be solved globally and efficiently using interior-point methods for geometric programming. We solve the problem using geometric programming and consider it as a baseline to compare the results of our DRL method developed in the next section.

\section{Reinforcement Learning for Resource Optimisation}\label{sec_dnn}
%RL is one of the many variants of ML for learning the optimal policy by interacting with the unknown environment. 
Fig. \ref{rl_fig} demonstrates the basic structure of RL in the system model under consideration. QL  is a  widely used model-free RL technique, which determines the quality of action and informs the RL agent which action will be optimum in each circumstances. 

Lets $S$ denotes possible states of the environment and $A$ denotes a set of discrete actions. Here, $s\in S$ is a specific state containing environmental features. For each time step $t$, agent receives a state $s^t\in S$ from the environment and takes a specific action $a^t$ from the action space $A$. The agent follows a policy  $\pi$ which is the probability of choosing an action $a^t$, given the current state being $s^t$. After the execution of the action, the agent receives a reward $r^t$ and state transition happens from state $s^t$ to state $s^{(t+1)}$. Then the process forms an experience $e^{(t+1)}=\left(s^{(t)}, a^{(t)}, r^{(t)}, s^{(t+1)})\right.$ which summarizes the interaction with the environment. This process is repeated until the agent reaches the terminal state after which the process restarts again. The goal of QL is to find an optimal policy $\pi$ which, at time t, maximizes the future cumulative reward, denoted as \cite{ahmed2019deep}:
\begin{align}
R_{t}=\sum_{k=0}^{\infty} \gamma^{k} r_{t+k}.
\end{align}
Here, $\gamma \in(0,1]$ is a factor which denotes the priority of future rewards in comparison with the current reward. We define an action-value function $Q^\pi(s,a)$, which is the expected return starting from state $s$, taking an action $a$, while following policy $\pi$ \cite{nasir2019multi}:
\begin{align}
Q^{\pi}(s, a)=\mathbb{E}_{\pi}\left[R^{(t)} \mid s^{(t)}=s, a^{(t)}=a\right].
\end{align}
The $Q$-function satisfies the Bellman equation given below:
\begin{align}
  Q^{*} (s, a)=R(s, a)+\gamma \sum_{s^{\prime} \in \mathcal{S}} P_{s s^{\prime}}^{a} \max _{a^{\prime} \in \mathcal{A}} Q^{*}\left(s^{\prime}, a^{\prime}\right).
\end{align}
The QL algorithm uses a lookup table ($Q$-table) to store the output of $Q$-function called $Q$-values. The agent uses an $\epsilon$-greedy policy in order to choose action. According to the $\epsilon$-greedy policy, at each state the agent either chooses to explore the environment by choosing action randomly or exploit $Q$-table and selects the action with the maximum $Q$-value.

The problem with traditional QL is that it stores the $Q$-values in tabular form. So, for high dimensional state spaces, QL becomes infeasible quickly since its space requirement for the $Q$-table becomes impractical. %raditional QL is only feasible if the observation space is small and
In this correspondence, we use DQN which is a combination of traditional QL and a deep neural network (DNN). Instead of using a table to store  the $Q$-values, DQN uses a DNN which receives sates and provides $Q$-values as output for each possible action from that state. To avoid correlation between input data, a buffer of experience called replay memory is also created. The DQN is trained with training data sampled randomly from the replay memory. This technique is known as experience replay. In DQL, we approximate the optimal action-value function by using a neural network, $Q(s, a ; \theta) \approx Q^{*}(s, a).$ Here, $Q(s, a ; \theta)$ is called the DQN and $\theta$ is a parameter of the neural network. 
The MSE of the Bellman equation is reduced by the iterative update, which is used to train the $Q$-network.

\subsection{Proposed DQL Approach}
In our proposed DQL approach, we exploit a single DQN agent which communicates with the environment in order to learn the optimal power of the users. The agent, instead of assigning more power to the user with best channel condition, tries to maintain reliability and assign each user at least minimum power, so that they can maintain the minimum required SINR threshold $\gamma_0$. Once reliability is ensured, the agent gives more emphasis on maximizing the global sum-rate instead of trying to maximize the rate of each individual users. The resource sharing scheme is performed in two phases, first, in the training or learning stage and then in the implementation or testing phase. In the learning phase, the DQN agent access the readily available system performance reward and adjusts its action to achieve an optimal policy by updating the DQN. In implementation phase, the agent receives observation from environment and selects the action according to the $Q$-value received at the output given by the trained DQN.

\subsubsection{State Space}
The state space $S$ consists of power allocation information of other users from the previous step and channel realization of the user ($h_{kj}$), which follows a certain distribution. We assume at the beginning of each time step $t$ that the users' channel information is instantaneously available at the transmitter.

\subsubsection{Action Space}
The resource allocation design using RL comes down to the transmission power control  for each user. So, the total number of actions depend on the number of power levels we are using. Even though transmission power  takes continuous values in most cases for reducing complexity, in this paper,  we limit the power control options to a fixed number of power levels chosen from the interval $[P_{max},P_{min}]$. We assume that the action space $A$ has $n$ power levels.

 \subsubsection{Reward Design}
 Due to inherent flexibility in reward design, RL is an appealing tool for solving problems that are hard to optimize using traditional methods. In the original problem formulation in section II, we set the design objective to maximize the overall sum-rate  ensuring reliability of each user (i.e. $SINR\geq\gamma_0$). Accordingly, we define the reward function for each agent as per the achievable rate of each user, $R_{k}$. So the reward $r_t$ at each time step becomes:
\begin{align}
  r_t = R_{k}.
\end{align} 
To ensure that the reliability constraint is satisfied, the condition in \eqref{prob_srm3_c1} is checked during each reward calculation. If the condition is not fulfilled, the algorithm  simply declares the power choice of that user as invalid and sets the reward to zero. Thus, the overall reward design can be expressed as
 \begin{align}
     R_t=\left\{\begin{array}{ll}
   \!R_{k},\! \!&\! \text { if }\label{reward}
   (1 - p_0) \prod_{j \ne k} \left (1+ \frac{\gamma_0 |{h}_{kj}|^{2}{p}_{j}} {|{h}_{kk}|^{2} {p}_{k}} \right) \leq 1  \\
\!0, \!&\! \text { otherwise. }
\end{array}\right. 
 \end{align}
 
 \subsection{Training the DNN}
The objective of training design is to exploit a trained DQN to replay the experience in learning the resource allocation policy. The DQN takes an observation form the environment $s^t$ as input and outputs the possible actions. Multiple episodes are run to train the DQN and the agent uses the $\epsilon$-greedy policy to explore the environment. After the transition of the environment due to the actions taken as well as the change of channel information, the agent saves the transition tuple $ (s^{(t)}, a^{(t)}, r^{(t)}, s^{(t+1)})$ in a replay memory. A mini batch of transition tuple or experience is sampled randomly from the replay memory. The mini batch is used to train the DQN  and  stochastic  gradient-descent  method is used for updating the DQN. The goal is to minimize the sum-squared error:
\begin{align}
 \sum\left[R^{(t+1)}+\gamma \max _{a^{\prime}} Q\left(s^{(t+1)}, a^{\prime} ;\label{mse} \theta^{-}\right)-Q\left(s^{(t)}, a_{(t)} ; \theta\right)\right]^{2}.  
\end{align}
Here $\theta^{-}$  is the parameter of target $Q$-network which is a DQN  with frozen weights. The target $Q$-network parameters are initialized by duplicating the parameters of training DQN and after a certain number of episodes they are updated to mirror the parameters of the training DQN. The reason for using two DQN is that if we use same $Q$-network for generating the target $Q$-value and the predicted $Q$-value, the learning becomes really unstable. So, 2 DQNs are used to ensure a stable learning. The overall training procedure is summarized in Algorithm~\ref{algo1}.
 
\begin{algorithm} [ht!]
	\caption{Resource Allocation with DRL} 
	 \begin{algorithmic}[1] \label{algo1}
	    \STATE Initialize the environment $E$ (users' channel information)
	    \STATE Initialize the Q-network with action space $A$.
	\FOR {each episode }
	\STATE Reset environment.
	\STATE Initialize $t=0$.
	\REPEAT %{each step $t$ }
	    \STATE Observe state $S_t$.
     	\STATE Choose action $a_t$ from $A$ using  $\epsilon$-greedy strategy.
    	\STATE Agent takes action $a_t$ and receives reward $r_{t}$.
	    \STATE State transition happens and agent receives next-state $S_{t+1}$.
      	\STATE Store  $(s^{(t)}, a^{(t)}, r^{(t)}, s^{(t+1)})$ in replay memory.
      	\STATE Sample mini-batch from replay memory.
        \STATE Using the stochastic gradient descent method, optimize error between Q-network and learning targets
        defined in \eqref{mse}.
       \STATE t=t+1.
	\UNTIL{ $t>$  maximum step per episode.} 
	\ENDFOR
	\end{algorithmic} 
\end{algorithm}

\subsection{Testing the DNN}
In testing stage, at each time step $t$, the agent receives channel realization $s^t$ and $\epsilon$ is set to the value from the very last training step. The agent selects an action $a^{(t)}$, which has the maximum $Q$-value according to the DQN. After that the agent starts transmission with the power level according to their selected action.
Since the training procedure described in Algorithm~\ref{algo1} is computationally expensive, it can be performed  offline for a large number of channel conditions, over many episodes and the computationally inexpensive, testing or implementation phases can be executed online for the actual network deployment.

\section{Numerical Simulations}\label{sec_sim}
In this section, we compare the analytical results obtained in Section~\ref{sec_vision} with those of the proposed DQN algorithm. The channel coefficients are generated using the standards of Gaussian IC \cite{jrnl_inter}. The channel coefficients follow a Rayleigh fading distribution with zero mean and unit variance. We consider minimum SINR threshold $\gamma_0$ as $-10$ dB and $P_{min}$ as $0$ dB. We use a dense DQN with three hidden layers consisting of $200$, $100$ and $40$ neurons. We also use the rectified linear unit (ReLU), $f(x) = \max(0, x)$ as activation function and \emph{RMSProp} optimizer is used for updating the weights of the network using learning rate of 0.001 and updating the DQN over $3000$ episode. The proposed method is compared with several methods namely: a GP method discussed in Section II, a WMMSE algorithm developed in \cite{wmmse}, a supervised DNN \cite{sun2017learning} and a WMMSE based supervised learning method \cite{JGao19}.

\begin{figure}[ht!]
\centering
\includegraphics[width=0.85\linewidth]{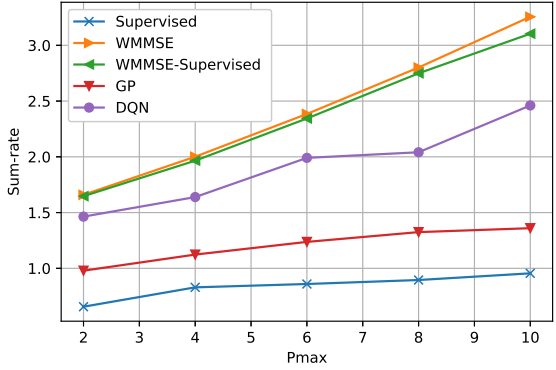}
\caption{Sum-rate versus transmit power budget $P_{\rm max}$.}\label{sum-rate}
\end{figure}

The Fig. \ref{sum-rate} shows the sum rate performance of the DQN in a 4 user scenario with outage probability $p_0 = 0.3$.
We can see that the proposed method performs significantly better than the existing GP algorithm \cite{boyd_out}. The WMMSE approach in \cite{wmmse}  performs better since it does not maintain reliability constraint. Our proposed approach is capable of obtaining better result than the approach in \cite{wmmse}, if we ignore the reliability constraint during reward design as shown in Fig.  \ref{sum-rate}.

\begin{figure}[ht!]
\centering
\includegraphics[width=0.85\linewidth]{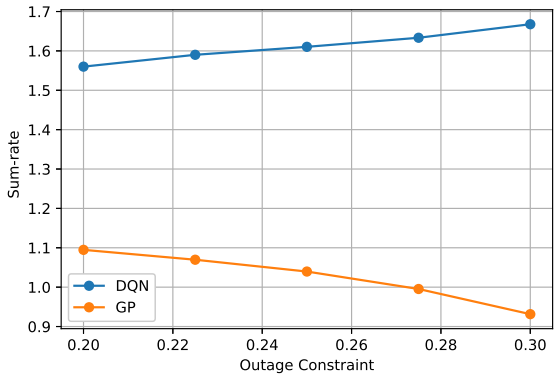}
\caption{Sum-rate versus outage probability threshold $p_0$.}\label{p0}
\end{figure}
%\subsection{Impact of outage probability $p_0$}
Fig. \ref{p0} shows the performance of our proposed method over varying outage probability $p_0$.  We make the comparison considering $P_{max}$ as $4$ dB. It can be seen from the figure that the proposed DQN method significantly outperforms traditional GP method. It can also be observed that the sum-rate of the proposed method increases with $p_0$, while in traditional approaches performance decreases. This can be explained with the reward design \eqref{reward}. When $p_0$ increases, due to the term $(1-p_0)$ the overall condition term decreases and thus offers higher probability of becoming less then or equal to one and full-filling the condition for receiving reward resulting in increased sum-rate performance. %That is the reason DQN performs better with increasing outage constraint, $p_0$ .

\begin{figure}[ht!]
\centering
\includegraphics[width=0.85\linewidth]{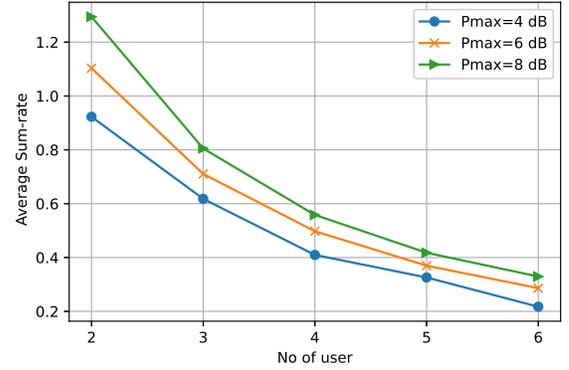}
\caption{Average sum-rate versus the number of user pairs $K$.}\label{p1}
\end{figure}

%\subsection{Impact of user density}
Fig. \ref{p1} shows average sum rate against different number of users. As expected, the average  sum-rate performance of the DQN decreases as the number of user increases in the system since higher number of users generate additional interference. This effect can be compensated by increasing transmit power budget as evident from the Fig. \ref{p1}.
%and the DQN needs to maintain reliability for each user.

\section{Conclusion}\label{sec_con}
 In this paper, we have developed a DRL based resource allocation scheme which successfully learns the power allocation of a wireless interference channel and guarantees reliability requirements for each user. We exploit a DQN with experience to learn the relationship between users' channel information and power allocation. We have demonstrated the superiority of the proposed approach by comparing the sum-rate performance with that of the existing GP-based solution.

\label{sec4}

% Can use something like this to put references on a page
% by themselves when using endfloat and the captionsoff option.
\ifCLASSOPTIONcaptionsoff
  \newpage
\fi

\bibliographystyle{IEEEtran}\footnotesize{

\bibliography{IEEEabrv,refdb}}%

% that's all folks
\end{document}